\documentclass[11pt,a4paper, final, twoside]{article}
\usepackage{amsmath}
\usepackage{fancyhdr}
\usepackage{amsthm}
\usepackage{amsfonts}
\usepackage{amssymb}
\usepackage{amscd}
\usepackage{mathabx}
\usepackage{latexsym}
\usepackage{amsthm}
\usepackage{graphicx}
\usepackage{graphics}
\usepackage{afterpage}
\usepackage[colorlinks=true, urlcolor=blue,  linkcolor=black, citecolor=black]{hyperref}
\usepackage{color}
\setcounter{MaxMatrixCols}{10}

\newtheorem{thm}{thm}[section]

\theoremstyle{proposition}

\theoremstyle{definition}

\theoremstyle{rem}
\newtheorem{rem}{rem}[section]

\numberwithin{equation}{section}
\pagenumbering{arabic} 

\newcommand{\RR}{\mathbb{R}}
\newcommand{\NN}{\mathbb{N}}

\begin{document}
\hyphenpenalty=100000

\begin{flushright}

{\Large \textbf{\\Relaxing the Hypotheses of Symmetry and Time-Reversibility in Genome Evolutionary Models}}\\[5mm]
{\large \textbf{Jacques M. Bahi$^\mathrm{1}$, Christophe Guyeux$^\mathrm{1}$} and \textbf{Antoine Perasso$^\mathrm{*2}$ \footnote{\emph{*Corresponding author: E-mail: antoine.perasso@univ-fcomte.fr}}}}\\[1mm]
$^\mathrm{1}${\footnotesize \it UMR6147~Institut FEMTO-ST, Universit\'e de Franche-Comt\'e, 25030 Besan\c con, France}\\ $^\mathrm{2}${\footnotesize \it UMR6249~Chrono-environnement, Universit\'e de Franche-Comt\'e, 25030 Besan\c con, France}
\end{flushright}

{\Large \textbf{Abstract}}\\[4mm]
\fbox{%
\begin{minipage}{5.4in}{\footnotesize Various genome evolutionary models have been proposed these last decades to predict the evolution of a DNA sequence over time, essentially described using a mutation matrix. By essence, all of these models
relate the evolution of DNA sequences to the computation of the successive powers of the mutation matrix. To make this computation possible, hypotheses are assumed for the matrix, such as symmetry and time-reversibility, which are not compatible with mutation rates that have been recently obtained experimentally on genes $ura3$ and $can1$ of the Yeast 
\textit{Saccharomyces cerevisiae}. In this work, authors investigate systematically the possibility to relax either the symmetry or the time-reversibility hypothesis
 of the mutation matrix, by investigating all the possible matrices 
of size $2\times 2$ and $3 \times 3$.
As an application example, the experimental study on the Yeast \textit{Saccharomyces cerevisiae}
has been used in order to deduce a simple mutation matrix, and to compute the future evolution of the rate 
purine/pyrimidine for $ura3$ on the one hand, and of the particular behavior of cytosines and thymines compared to purines 
on the other hand.
} \end{minipage}}\\[1mm]
\footnotesize{\it{Keywords:} genome evolutionary models; stochastic processes; discrete dynamical system; matrix calculus;  nucleotides mutations}\\[1mm] 
\footnotesize{{2012 Mathematics Subject Classification:} 15A51; 15A16; 60G99; 92D10; 92D15; 92D20}

\afterpage{
\fancyhead{} \fancyfoot{} 
\fancyfoot[R]{\footnotesize\thepage}
\fancyhead[R]{\scriptsize\it British Journal of Mathematics and Computer Science 
{{X(X), XX--XX}},~20XX \\
 }}

\section{Introduction}

Due to mutations or recombination, some variations occur 
in the frequency of each codon, and these codons are thus not uniformly distributed into a given genome.
Since the late `60s, various genome evolutionary models have been proposed to
predict the evolution of a DNA sequence as generations pass. 
Mathematical models allow the prediction of such an evolution, in such a way 
that statistical values observed in current genomes can be at least partially recovered
from hypotheses on past DNA sequences.
Moreover, it can be attractive to study the genetic patterns (blocs of more than one 
nucleotide: dinucleotides, trinucleotides...) that appear and disappear depending on mutation parameters.

A first model for genomes evolution has been proposed in 1969 by Thomas 
Jukes and Charles Cantor \cite{Jukes69}. This first model is very simple,
as it supposes that each nucleotide has the probability $m$ to
mutate to any other nucleotide, as described in the following mutation
matrix,
$$
\left(\begin{array}{cccc}
* & m & m & m\\
m & * & m & m\\
m & m & * & m\\
m & m & m & *\\
\end{array}\right).
$$
In that matrix, the nucleotides are ordered as $(A,C,G,T)$, so that for instance the coefficient in row 3, column 2 represents the 
probability that the nucleotide $G$ mutates into a $C$ during the next time
interval, \emph{i.e.}, $P(G \rightarrow C)$. As diagonal elements can be 
deduced by the fact that the sum of each row must be equal to 1, they are
omitted here.

This first attempt has been followed up by Motoo Kimura \cite{Kimura80}, 
who has reasonably considered that transitions ($A \longleftrightarrow G$ and 
$T \longleftrightarrow C$) should not have the same mutation rate than
transversions ($A \longleftrightarrow T$, $A \longleftrightarrow C$, $T \longleftrightarrow G$, and $C \longleftrightarrow G$), 
this model being refined by Kimura in 1981, with three 
constant parameters to make a distinction between natural 
$A\longleftrightarrow T$, $C \longleftrightarrow G$ and unnatural
transversions, leading to: 
$$
\left(\begin{array}{cccc}
* & c & a & b\\
c & * & b & a\\
a & b & * & c\\
b & a & c & *\\
\end{array}\right).
$$

Joseph Felsenstein \cite{Felsenstein1980} has then supposed that the nucleotides 
frequency depends on the kind of nucleotide A,C,T,G. Such a supposition leads
to a mutation matrix of the form:
$$
\left(\begin{array}{cccc}
* & \pi_C & \pi_G & \pi_T\\
\pi_A & * & \pi_G & \pi_T\\
\pi_A & \pi_C  & * & \pi_T\\
\pi_A & \pi_C  & \pi_G & *\\
\end{array}\right)
$$
with $\pi_A$, $\pi_C$, $\pi_G$, and $\pi_T$ denoting the frequency of occurance of each nucleotide, respectively.
Masami Hasegawa, 
Hirohisa Kishino, and Taka-Aki Yano \cite{Hasegawa1985} have generalized the 
models of \cite{Kimura80} and \cite{Felsenstein1980}, introducing in 1985 the
following mutation matrix:
$$
\left(\begin{array}{cccc}
* & \alpha \pi_C & \beta \pi_G & \alpha \pi_T\\
\alpha \pi_A               & * &\alpha \pi_G & \beta\pi_T\\
\beta \pi_A               & \alpha\pi_C  & * & \alpha\pi_T\\
\alpha \pi_A               & \beta\pi_C  & \alpha \pi_G & *\\
\end{array}\right).
$$

These efforts have been continued by Tamura, who proposed in~\cite{Tamura92,Tamura93}
a simple method to estimate the number of nucleotide substitutions per site 
between two DNA sequences, by extending the model of Kimura
(1980). The idea is to consider a two-parameter method, for the case where a 
GC bias exists. Let us denote by $\pi_{GC}$ the frequency of this dinucleotide
motif. Tamura supposes that $\pi_G = \pi_C = \dfrac{\pi_{GC}}{2}$ and
$\pi_A = \pi_T = \dfrac{1-\pi_{GC}}{2}$, which leads to the following rate matrix: 
$$\begin{pmatrix} {*} & {\kappa(1-\pi_{GC})/2} & {(1-\pi_{GC})/2} & {(1-\pi_{GC})/2} \\ {\kappa\pi_{GC}/2} & {*} & {\pi_{GC}/2} & {\pi_{GC}/2} \\ {(1-\pi_{GC})/2} & {(1-\pi_{GC})/2} & {*} & {\kappa(1-\pi_{GC})/2} \\ {\pi_{GC}/2} & {\pi_{GC}/2} & {\kappa\pi_{GC}/2} & {*} \end{pmatrix}.$$

All these models are special cases of the GTR model~\cite{yang94}, in which the
mutation matrix has the form (using obvious notations):
$$
\left(\begin{array}{cccc}
* & f_{AC} \pi_C & f_{AG} \pi_G & f_{AT} \pi_T\\
f_{AC} \pi_A               & * & f_{CG} \pi_G & f_{CT} \pi_T\\
f_{AG} \pi_A               & f_{CG} \pi_C  & * & \pi_T\\
f_{AT} \pi_A               & f_{CT}\pi_C  & \pi_G & *\\
\end{array}\right).
$$
Non-reversible and non-symmetric models have, for their part, been considered in practical inferences since
at least a decade for phylogenetic studies, see for instance~\cite{Klosterman,Boussau,Yap}
Furthermore, in the nonlinear case, a mutation can exhibit a chaos, see for instance~\cite{Ganikhodjaev}.
As they are more regarded for their interest in practical inference investigations than on the theoretical side, they will not be developed in this article.

\begin{table}
\centering
\scriptsize
 \begin{tabular}{lcc}
  \hline
Mutation & $ura3$ & $can1$ \\
\hline
$T \rightarrow C$ & 4 & 4\\
$T \rightarrow A$ & 14 & 9\\
$T \rightarrow G$ & 5 & 5\\
$C \rightarrow T$ & 16 & 20\\
$C \rightarrow A$ & 40 & 21\\
$C \rightarrow G$ & 11 & 9\\
$A \rightarrow T$ & 8 & 4\\
$A \rightarrow C$ & 6 & 5\\
$A \rightarrow G$ & 0 & 1\\
$G \rightarrow T$ & 28 & 20\\
$G \rightarrow C$ & 9 & 12\\
$G \rightarrow A$ & 26 & 40\\
Transitions & 46 & 65\\
Transversions & 121 & 85\\
\hline
 \end{tabular}
\caption{\small Summary of sequenced $ura3$ and $can1$ mutations~\cite{Lang08}}
\label{ura3taux}
\end{table}

Due to mathematical complexity~\cite{Knopoff1,Knopoff2}, matrices theoretically investigated to model evolution of
DNA sequences are thus limited either by the hypotheses of symmetry and time-reversibility or by
the desire to reduce the number of parameters under consideration. These hypotheses
allow their authors to solve theoretically the DNA evolution problem, for instance by 
computing directly the successive powers of their mutation matrix. However, one
can wonder whether such restrictions on the mutation rates are realistic. Focusing
on this question, we used in~\cite{bgp12:ip} a recent research work of Lang and Murray~\cite{Lang08},
in which the per-base-pair mutation rates of the Yeast \textit{Saccharomyces cerevisiae}
have been experimentally measured (see Table~\ref{ura3taux}), allowing us to calculate concrete mutation matrices for genes $ura3$ and $can1$.
%
We deduced in \cite{bgp12:ip} that none 
of the existing genomes evolution models can fit such mutation
matrices, implying the fact that some hypotheses must be relaxed, even if this
relaxation implies less ambitious models: current models do not
match with what really occurs in concrete genomes, at least in the case of this 
yeast.
Having these considerations in mind, the data obtained by Lang and Murray
have been used in~\cite{bgp12:ip} in order to predict the evolution of 
the rates or purines and pyrimidines in the particular case of $ura3$.
Mathematical investigations and numerical simulations have been proposed, 
focusing on this particular gene and its associated matrix of size $2 \times 2$
(purines \emph{vs.} pyrimidines), and of size $3 \times 3$ (cytosines and thymines compared to purines).
Instead of focusing on two particular matrices, this extension of~\cite{bgp12:ip} 
investigates systematically all the possible mutation matrices of sizes $2 \times 2$
and $3 \times 3$. Thus, the study is finalized in this article, by investigating
all the possible cases, and discussing about their mathematical and biological 
relevance.

The remainder of this research work is organized as follows.
First of all the case of mutation matrices
of size $2 \times 2$ is recalled in Section \ref{Model22}  and applied to the $ura3$ gene 
taking into account purines and pyrimidines mutations. A simulation is then performed to compare this non reversible model to the classical symmetric Cantor model. 
The next sections deal with all the possible 6-parameters  models of size $3 \times 3$.
In Section \ref{Model33}, a complete theoretical study is led encompassing all the particular situations, 
whereas in Section~\ref{appnum} an illustrative example focusing on the  evolution of the
purines, cytosines, and thymines triplet is given for $ura3$. 
We finally conclude this work in Section \ref{Conclusion}.





\section{General Model of Size $2\times 2$}
\label{Model22}

In this section, a first general genome evolution model focusing on purines
versus pyrimidines is proposed, to illustrate the method and as a pattern
for further investigations. This model is applied to the case of the yeast
\textit{Saccharomyces cerevisiae}.

\subsection{A convergence result}

Let $R$ and $Y$ denote respectively the occurrence frequency of purines and pyrimidines in a sequence
of nucleotides, and $M=\left(
\begin{array}{cc}
	a & b \\
	c & d
\end{array}\right)
$
the associated mutation matrix, with
$a = P(R\to R)$, $b=P(R\to Y)$, $c=P(Y\to R)$, and $d=P(Y\to Y)$
satisfying
\begin{equation}\label{stochastic}
\begin{cases}
a+b = 1, \\
c+d = 1, \\
\end{cases}
\end{equation}
and thus $M=\left(
\begin{array}{cc}
	a & 1-a \\
	c & 1-c
\end{array}\right)
$.

The initial probability is denoted by $P_0 = (R_0 ~~ Y_0)$, where $R_0$ and $Y_0$ denote
respectively the initial frequency of purines and pyrimidines. So the occurrence 
probability at generation $n$ is $P_n =  P_0 M^n$, where $P_n=(R(n) ~~ Y(n))$ 
is a probability vector such that $R(n)$ (resp. $Y(n)$) is the rate of purines
(resp. pyrimidines) after $n$ generations.
The following theorem states the time asymptotic behavior of the probabilit $P_n$.

We recall the following result was proved in \cite{bgp12:ip}:
\begin{thm}
\label{th2d}
Consider a DNA sequence under evolution, whose mutation matrix is
$M=\left(
\begin{array}{cc}
	a & 1-a \\
	c & 1-c
\end{array}\right)
$
with $a=P(R\to R)$ and $c = P(Y\to R)$. 
\begin{itemize}
 \item If $a=1, c=0$, then the frequencies of purines and pyrimidines do not 
change as the generation pass.
 \item If $a=0, c=1$, then these frequencies oscillate at each generation between 
$(R_0 ~~ Y_0)$ (even generations) and $(Y_0 ~~ R_0)$ (odd generations).
 \item Else the value $P_n = (R(n) ~~ Y(n))$ of purines and pyrimidines 
frequencies at generation $n$ is convergent to the following limit:
\begin{equation*}
\lim_{n\to\infty} P_n  = \frac{1}{c+1-a}  \left(
\begin{array}{cc}
	c &	1-a
\end{array}\right).
\end{equation*}
\end{itemize}
\end{thm}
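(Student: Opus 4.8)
The plan is to exploit the fact that $M$ is a $2\times2$ stochastic matrix and to analyse the powers $M^n$ through its eigenstructure, since $P_n = P_0 M^n$. Because each row of $M$ sums to $1$, the column vector $\mathbf{1}=(1,1)^{\mathrm T}$ is a right eigenvector with eigenvalue $1$; the second eigenvalue is then read off from the trace $a+1-c$, giving the pair $\lambda_1 = 1$ and $\lambda_2 = a-c$. The entire behaviour is governed by $\lambda_2=a-c$, and the three bullet points of the theorem correspond exactly to $\lambda_2 = 1$, $\lambda_2 = -1$, and $|\lambda_2|<1$.

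First I would dispose of the two degenerate cases directly. Since $0\le a,c\le 1$, one has $a-c=1$ if and only if $a=1,c=0$, in which case $M$ is the identity matrix and $P_n = P_0$ for every $n$; likewise $a-c=-1$ if and only if $a=0,c=1$, in which case $M=\bigl(\begin{smallmatrix}0&1\\1&0\end{smallmatrix}\bigr)$ satisfies $M^2=I$, so $P_n$ alternates between $P_0=(R_0~~Y_0)$ at even generations and $P_0 M=(Y_0~~R_0)$ at odd ones. These settle the first two cases.

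For the generic regime $|a-c|<1$ I would compute the stationary distribution, i.e. the left eigenvector $\pi=(\pi_1~~\pi_2)$ associated with $\lambda_1=1$ and normalised by $\pi\mathbf{1}=1$. Solving $\pi M=\pi$ together with $\pi_1+\pi_2=1$ yields $\pi_1:\pi_2 = c:(1-a)$, hence
\begin{equation*}
\pi = \frac{1}{c+1-a}\left(\begin{array}{cc} c & 1-a\end{array}\right),
\end{equation*}
which is precisely the claimed limit. Since $\lambda_1=1\neq a-c=\lambda_2$ in this regime, $M$ is diagonalisable, so its spectral decomposition reads $M^n = \mathbf{1}\,\pi + (a-c)^n\,Q$, where $Q$ is the spectral projection onto the $\lambda_2$-eigenspace. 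As $|a-c|<1$, the factor $(a-c)^n$ tends to $0$, whence $M^n\to\mathbf{1}\,\pi$. Multiplying on the left by $P_0$ and using $P_0\mathbf{1}=R_0+Y_0=1$ gives $\lim_{n\to\infty} P_n = \pi$, independently of the initial frequencies.

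The only delicate points are bookkeeping ones: confirming that within the admissible square $[0,1]^2$ the extremal values $a-c=\pm1$ force exactly the two degenerate matrices, so that diagonalisability can fail only in the already-separated identity case, and checking that the normalising denominator never vanishes in the convergent regime, which holds since $c+1-a=1-\lambda_2>0$ whenever $\lambda_2<1$. No genuine analytic obstacle arises; the heart of the argument is simply that the subdominant eigenvalue $a-c$ has modulus strictly less than one.
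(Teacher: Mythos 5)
Your proof is correct, but it follows a genuinely different route from the paper. The paper performs polynomial division of $X^n$ by the characteristic polynomial, applies Cayley--Hamilton to get $M^n = a_n M + b_n I_2$, solves for the scalars $a_n, b_n$ by evaluating at the roots $1$ and $a-c$, and then writes out the full explicit formula for $M^n$ before passing to the limit. You instead exploit the stochastic structure directly: the row-sum condition gives $\mathbf{1}$ as a right eigenvector for $\lambda_1=1$, the trace yields $\lambda_2=a-c$, and the limit is identified \emph{a priori} as the normalised left eigenvector $\pi = \frac{1}{c+1-a}(c~~1-a)$, with convergence following from the spectral decomposition $M^n = \mathbf{1}\pi + (a-c)^n Q$ and $|a-c|<1$. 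Your treatment of the degenerate cases ($a-c=\pm 1$ forcing $M=I_2$ or the swap matrix) matches the paper's. What the paper's approach buys is an explicit closed form for $M^n$ at every finite $n$ (its equation for $M^n$), which is reused for the numerical simulations; what your approach buys is a shorter argument that isolates the conceptual point (the subdominant eigenvalue has modulus less than one) and anticipates the Perron--Frobenius viewpoint the authors themselves flag as future work for the $4\times 4$ case. One small remark: you assert the form $M^n = \mathbf{1}\pi + (a-c)^n Q$ without exhibiting $Q$; this is standard for a diagonalisable $2\times 2$ matrix (take $Q = I_2 - \mathbf{1}\pi$, the complementary spectral projection), but a fully self-contained write-up should say so.
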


\begin{rem}
Note that the case $a\neq 1-c$, resp. $a\neq c$, translates the non symmetry property, resp. the time reversibility property.
\end{rem}
\begin{proof}
To prove the theorem, we have to determinate $M^n$ for evey $n\in \NN$.\\
A division algorithm leads to the existence of a polynomial of degree $n-2$,
denoted by $Q_M \in \RR_{n-2}[X]$, and to $a_n,b_n\in\RR$ such that
\begin{equation}\label{div}
X^n = Q_M(X) \chi_M(X) + a_n X + b_n,
\end{equation}
when $\chi_M$ is the characteristic polynomial of $M$. 
Using both the Cayley-Hamilton theorem and the 
equality given above, we thus have
\begin{equation*}
M^n = a_n M + b_n I_2.
\end{equation*}

In order to determine $a_n$ and $b_n$, we must find the roots of $\chi_M$. As $\chi_M(X) = X^2- Tr(M) X + \det(M)$ and due to \eqref{stochastic},  we can conclude that $1$ is a root of $\chi_M$, which thus has two real roots: $1$ and $x_2$. 
As the roots sum is equal to -tr(A), we conclude that $x_2 = a-c$.

If $x_2 = a-c = 1$, then $a=1$ and $c=0$ (as these parameters are in $[0,1]$),
so the mutation matrix is the identity and the frequencies of purines and pyrimidines
into the DNA sequence does not evolve. If not, evaluating  \eqref{div} in both $X=1$ and $X=x_2$, we thus obtain
\begin{equation*}
\begin{cases}
1 = a_n + b_n, \\
(a-c)^n = a_n (a-c) + b_n.
\end{cases}
\end{equation*}
Considering that $a-c \neq 1$, we obtain
\begin{equation*}
a_n  = \frac{(a-c)^n - 1}{a-c-1}, \qquad b_n = \frac{a-c - (a-c)^n}{a-c-1}.
\end{equation*}
Using these last expressions into the equality 
linking $M$, $a_n$, and $b_n$, we thus deduce the value of $P_n = P_0 M^n$, where
\begin{equation}
\label{mn2d}
M^n = \frac{1}{a-c-1} \left(
\begin{array}{cc}
	(a-1) (a-c)^n - c & (1-a) ((a-c)^n -1)) \\
	c ((a-c)^n-1) & -c (a-c)^n + a -1
\end{array}\right).
\end{equation}

If $a=0$ and $c=1$, then 
$M = \left(\begin{array}{cc}0 & 1 \\ 1 & 0\end{array}\right)$, so $M^{2n}$ is
the identity $I_2$ whereas $M^{2n+1}$ is $M$. Contrarily, if 
$(a,c) \notin \{(0,1); (1,0)\}$, then the limit of $M^n$ can be easily found using \eqref{mn2d}.\\
All the studided cases for $M^n$ lead to Theorem \ref{th2d}.
\end{proof}

\subsection{Numerical Application}

For numerical application, we will consider mutations rates in the \emph{ura3} gene of the Yeast \emph{Saccharomyces cerevisiae}, as obtained by Gregory I. Lang and Andrew W.Murray~\cite{Lang08} and summed up in Table~\ref{ura3taux}. They have measured phenotypic mutation rates, indicating that the per-base pair mutation rate at \emph{ura3} is $m=3.0552\times 10^{-7}$/generation for the whole gene.

For the majority of Yeasts they studied, \emph{ura3} is constituted by 804 bp: 133 cytosines,
211 thymines, 246 adenines, and 214 guanines. So $R_0 = \dfrac{246+214}{804} \approx 0.572$,
and $Y_0 = \dfrac{133+211}{804} \approx 0.428$. 
Using these values in the historical model of Jukes and Cantor~\cite{Jukes69}, we obtain the evolution depicted in
Figure~\ref{Cantor2d}. 

\begin{figure}[h]
\centering
\includegraphics[scale=0.5]{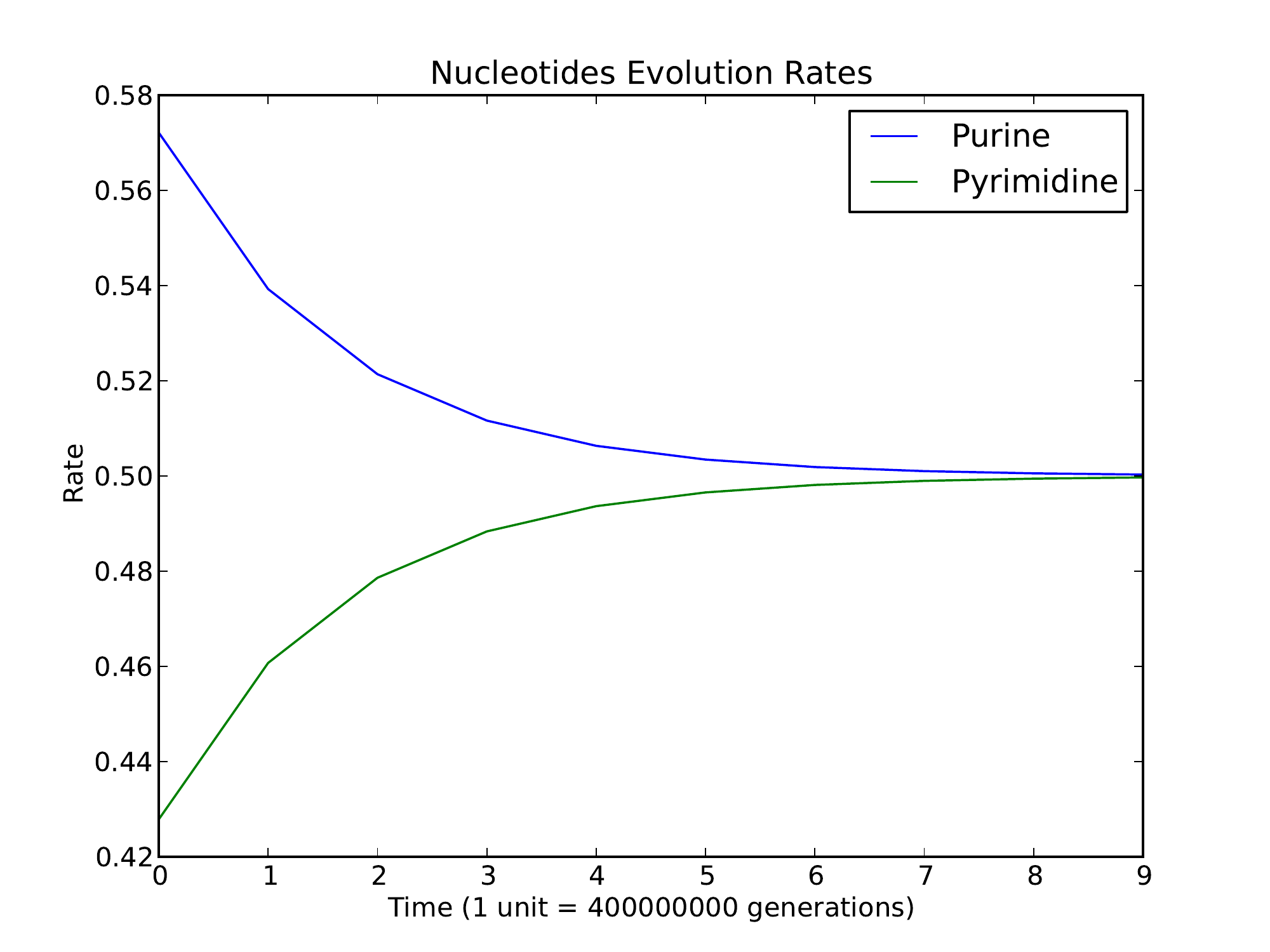}
\caption{\small Prediction of purine/pyrimidine evolution of $ura3$ gene in symmetric Cantor model.}
\label{Cantor2d}
\end{figure}

Theorem \ref{th2d} allows us to compute the limit of the rates of purines and 
pyrimidines:
\begin{description}
 \item[Computation of probability $a$.]
$P(R \rightarrow R) = (1-m)$ \linebreak $+P(A \rightarrow G)\dfrac{P_A(n)}{P_A(n)+P_G(n)} + P(G \rightarrow A)\dfrac{P_G(n)}{P_A(n)+P_G(n)}$. The use of Table~\ref{ura3taux} 
and the hypothesis that the base frequencies have already reached their
steady states implies
that $a = (1-m) + \left( m \dfrac{0}{46+121}\right) \times \dfrac{\frac{246}{804}}{\frac{246}{804}+\frac{214}{804}} +\left( m \dfrac{26}{46+121}\right) \times \dfrac{\frac{214}{804}}{\frac{246}{804}+\frac{214}{804}}$.
We thus obtain that $a=1-\dfrac{17814m}{19205}\approx 0.999999716$.
 \item[Computation of probability $c$.]
Similarly, $P(Y \rightarrow Y) = (1-m) + P(C \rightarrow T) \dfrac{P_C}{P_C+P_T}+
P(T \rightarrow C) \dfrac{P_T}{P_C+P_T}$ $=(1-m) + m \dfrac{16}{46+121} \times \dfrac{133}{133+211} + m \dfrac{4}{46+121} \times \dfrac{211}{133+211}$ $=1-m + m\dfrac{743}{14362}$.
So $c = P(Y \rightarrow R) 
= m\left(1-\dfrac{743}{14362}\right) \approx 2.897\times 10^{-7}$.
\end{description}

As a consequence the purine/pyrimidine mutation matrix that corresponds to the values of Table~\ref{ura3taux} is:
\begin{equation}\label{mutmat1}
M=m\left(
\begin{array}{cc}
	\dfrac{1391}{19205} & \dfrac{17814}{19205} \\\\
	\dfrac{13619}{14362} & \dfrac{743}{14362}
\end{array}\right).
\end{equation}
where $m=3.0552\times 10^{-7}$ as mentionned previously. 

Using the value of $m$ for the $ura3$ gene leads
to $1-a=2.83391\times 10^{-7}$ and $c=2.89714\times 10^{-7}$,
which can be used in Theorem~\ref{th2d} to 
conclude that the rate of pyrimidines is convergent
to $49.45\%$ whereas the rate of purines 
converge to $50.55\%$.
Numerical simulations using data published 
in~\cite{Lang08} are given in
Figure~\ref{evolution par notre methode}, leading
to a similar conclusion.

\begin{figure}
\centering
\includegraphics[scale=0.5]{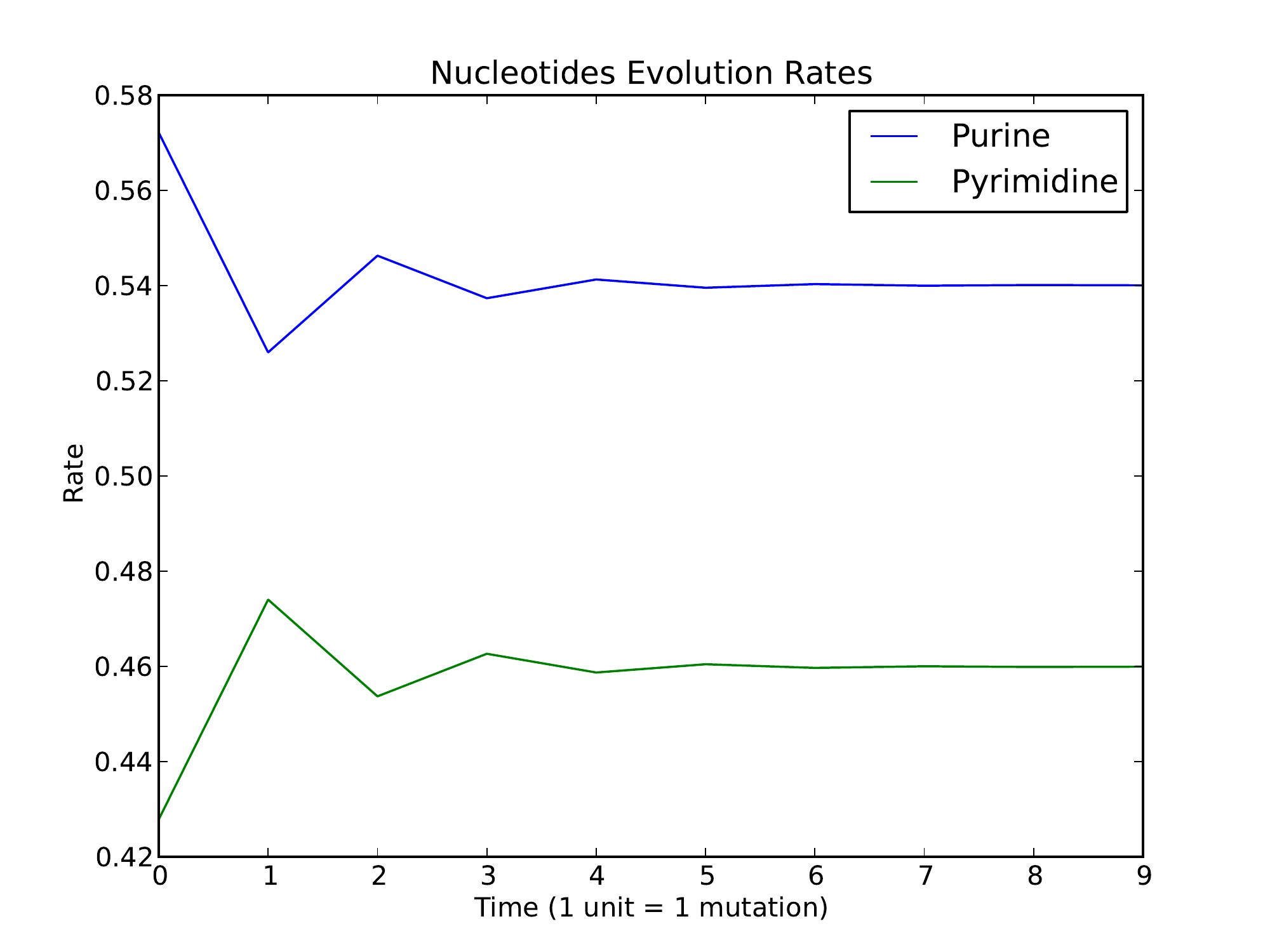}
\caption{\small Prediction of purine/pyrimidine evolution of $ura3$ gene in general model of size $2\times 2$.}
\label{evolution par notre methode}
\end{figure}

\section{A First Genomes Evolution Model of size $3 \times 3$ having 6 Parameters without Time-reversibility hypothesis}
\label{Model33}

In order to investigate the evolution of the frequencies of cytosines and thymines
in the gene $ura3$, a model of size $3 \times 3$ compatible with real
mutation rates of the yeast \textit{Saccharomyces cerevisiae} is now presented.

\subsection{Formalization}

Let us consider a line of yeasts where a given gene is sequenced at
each generation, in order to clarify explanations. The $n-$th generation
is obtained at time $n$, and the frequences of purines, cytosines, and thymines
at time $n$ are respectively denoted by $P_R(n), P_C(n)$, and $P_T(n)$.

Let $a$ be the probability that a purine is changed into a cytosine between
two generations, that is:  $a = P(R \rightarrow C)$. 
Similarly, denote by $b,c,d,e,f$ the 
respective probabilities: $P(R \rightarrow T)$, $P(C \rightarrow R)$, 
$P(C \rightarrow T)$, 
$P(T \rightarrow R)$, and $P(T \rightarrow C)$. Contrary to existing 
approaches, $P(R \rightarrow C)$ is not supposed to be equal to 
$P(C \rightarrow R)$, and the
same statement holds for the other probabilities. For the sake of simplicity,
we will suppose in all that follows that $a,b,c,d,e,f$ are not 
time dependent.

Let 
\begin{equation*}
 M = \left( 
\begin{array}{ccc}
1-a-b & a & b \\
c & 1-c-d & d \\
e & f & 1-e-f
\end{array}
\right)
\end{equation*}
be the mutation matrix associated to the probabilities mentioned above, and $P_n$ 
the vector of occurrence, at time $n$, of each of the three kind of
nucleotides. In other words, 
$P_n = (P_R(n) ~~ P_C(n) ~~ P_T(n))$. 
Under that hypothesis, $P_n$ is a 
probability vector: $\forall n \in \mathbb{N},$
\begin{itemize}
 \item $P_R(n), P_C(n), P_T(n) \in [0,1]$,
 \item $P_R(n) + P_C(n) + P_T(n) = 1$,
\end{itemize}
Let $P_0 = (P_R(0) ~~ P_C(0) ~~ P_T(0)) \in [0,1]^3$ be the initial probability vector. We have 
obviously: 
\begin{equation*}
P_R(n+1) = P_R(n) P(R \rightarrow R)+
P_C(n) P(C \rightarrow R)+P_T(n) P(T \rightarrow R), 
\end{equation*}
with similar equalities for $P_C(n+1)$ and $P_T(n+1)$ so that
\begin{equation}
\label{Pn}
 P_n = P_{n-1} M = P_0 M^n.
\end{equation}
In all that follows we wonder if, given the parameters $a,b,c,d,e,f$ as in~\cite{Lang08},
one can determine the frequency of occurrence of any of the three kind
of nucleotides when $n$ is sufficiently large, in other words if the limit of
$P_n$ is accessible by computations.

\subsection{Resolution}

This section, that is a preliminary of the convergence study, is devoted to the determination of the powers of matrix $M$ in the general case and some particular situations

\subsubsection{Determination of $M^n$ in the general case}

The characteristic polynomial of $M$ is equal to
\begin{equation*}
\begin{array}{cl} 
\chi_M(x) &= x^3 +(s-3) x^2+(p-2s+3)x-1+s-p \\
 & = (x-1)\left(x^2+(s-2)x+(1-s+p)\right),
\end{array}
\end{equation*}
where
\begin{gather*}
s = a+b+c+d+e+f, \\
p=ad+ae+af+bc+bd+bf+ce+cf+de, \\
det(M)  =1-s+p.
\end{gather*}

The discriminant of the polynomial of degree 2 in the factorization of $\chi_M$ is equal to 
$\Delta = (s-2)^2-4(1-s-p) = s^2-4p$. Let $x_1$ and
$x_2$ the two roots (potentially complex or 
equal) of $\chi_M$, given by
\begin{equation}
\label{x1x2}
x_1 = \dfrac{-s+2-\sqrt{s^2-4p}}{2} \textrm{ and }  x_2 = \dfrac{-s+2+\sqrt{s^2-4p}}{2}. 
\end{equation}

Let $n \in \mathbb{N}, n \geqslant 2$. As $\chi_M$ is a polynomial of degree 3, a division algorithm of $X^n$ by $\chi_M(X)$ leads to the
existence and uniqueness of two polynomials $Q_n$ and $R_n$, such that
\begin{equation}
\label{resol}
X^n = Q_n(X)\chi_2(X)+R_n(X),
\end{equation}
where the degree of $R_n$ is lower than or equal to
the degree of $\chi_M$, \emph{i.e.}, 
$R_n(X) = a_n X^2 + b_n X + c_n$ with $a_n, b_n, c_n \in \mathbb{R}$ for every $n\in\mathbb{N}$.
By evaluating \eqref{resol} in the three
roots of $\chi_M$, we find the system
$$\left\{ 
\begin{array}{cl}
1 & = a_n + b_n + c_n\\
x_1^n & = a_n x_1^2 + b_n x_1 + c_n\\
x_2^n & = a_n x_2^2 + b_n x_2 + c_n\\
\end{array}
\right.$$
This system is equivalent to
$$\left\{ 
\begin{array}{cccccl}
c_n & + &  b_n  &+ & a_n & = 1\\
    && b_n(x_1-1)& + & a_n(x_1^2-1) & = x_1^n-1\\
    && b_n(x_2-1) &+ & a_n(x_2^2-1) &  = x_2^n-1\\ 
\end{array}
\right.$$
If we suppose that $x_1 \neq 1$, $x_2 \neq 1$, and $x_1 \neq x_2$,
then standard algebraic computations give
\begin{equation*}
\left\{
\begin{array}{l}
a_n = \dfrac{1}{x_2-x_1}\left[\dfrac{x_2^n-1}{x_2-1}-\dfrac{x_1^n-1}{x_1-1}\right],\\
\\
b_n = \dfrac{x_1+1}{x_1-x_2}\dfrac{x_2^n-1}{x_2-1} + \dfrac{x_2+1}{x_2-x_1}\dfrac{x_1^n-1}{x_1-1},\\
\\
c_n=1-a_n-b_n.
\end{array}
\right.
\end{equation*}
Using for $i=1,2$ and $n\in\mathbb{N}$ the following notation,
\begin{equation}\label{Xi}
X_i(n) = \dfrac{x_i^n-1}{x_i-1}, 
\end{equation}
and since $x_2-x_1=\sqrt{\Delta}$, the system above
can be rewritten as
\begin{equation}
\label{soluce}
\left\{
\begin{array}{l}
a_n = \dfrac{X_2(n)-X_1(n)}{\sqrt{\Delta}},\\
\\
b_n = \dfrac{(x_2+1)X_1(n)-(x_1+1)X_2(n)}{\sqrt{\Delta}},\\
\\
c_n=1+\dfrac{x_1 X_2(n)-x_2 X_1(n)}{\sqrt{\Delta}}.
\end{array}
\right.
\end{equation}
By evaluating \eqref{resol} in $M$ and due to the
theorem of Cayley-Hamilton, we finally have for every integer $n\geqslant 1$,
\begin{equation}
\label{Mn}
M^n = a_n M^2 + b_n M + c_n I_3,
\end{equation}
where $I_3$ is the identity matrix of size 3, 
$a_n, b_n,$ and $c_n$ are given by \eqref{soluce}, and $M^2$ is given by
$$
M^2=\left(
\begin{array}{c|c|c}
a^2 + 2ab + ac - 2a & -a^2 - ab - ac  & -ab + ad - b^2 \\
+ b^2 + be - 2b + 1  & - ad + 2a + bf & - be - bf + 2b\\
\hline
-ac - bc - c^2  & ac + c^2 + 2cd - 2c  & bc - cd - d^2 \\
- cd + 2c + de & + d^2 +df - 2d + 1 & - de - df + 2d\\
\hline
-ae - be + cf  & ae - cf - df  & be +df + e^2 + 2ef \\
- e^2 - ef + 2e & - ef - f^2 + 2f & - 2e + f^2 - 2f + 1
\end{array}
\right)
.$$

\subsubsection{Determination of $M^n$ in particular situations}\label{Mnparticular}

Formulations of \eqref{soluce} only hold for $x_1 \neq x_2$, $x_1 \neq 1$, and
$x_2 \neq 1$. We now investigate these latter cases.

\paragraph{Preliminaries}

Let us firstly remark that, as the mutation matrix M is stochastic, we have
necessarily $0\leqslant a+b \leqslant 1$, $0\leqslant c+d \leqslant 1$, and
$0\leqslant e+f \leqslant 1$. These inequalities imply that 
$s \in [0,3]$. Consequently from the definition of $p$ one can check that 
$p=ad+a(e+f)+b(c+d)+bf+c(e+f)+de \leqslant ad + a + b + bf + c +de \leqslant s$, as each parameter is in $[0,1]$.
To sum up,
\begin{equation}
\label{ineg}
 0 \leqslant p \leqslant s \leqslant 3.
\end{equation}

Suppose now that $\Delta \geqslant 0$. Then \eqref{x1x2} and \eqref{ineg} imply that
\begin{equation}
\label{x1x2eval}
 x_1 = \dfrac{-s+2-\sqrt{\Delta}}{2} \in \left[-2;1\right], x_2 = \dfrac{-s+2+\sqrt{\Delta}}{2} \in \left[-\frac{1}{2};\frac{5}{2}\right]
\end{equation}
Note that, as we deal with a stochastic process, the module of the eigenvalues of $M$ are smaller than 1, so $|x_1|\leqslant 1$ and $|x_2| \leqslant 1$.

\paragraph{Suppose that $x_1=1$} Then $-s=\sqrt{s^2-4p} \Longleftrightarrow s=p=0$.
So $a=b=c=d=e=f=0$, and the mutation matrix is equal to the identity of size 3. 
Conversely, if $a=b=c=d=e=f=0$, then $x_1=1$. 

In that situation, the system does not evolve.

\paragraph{Suppose that $x_2=1$ (and $x_1 \neq 1$)} Then $s=\sqrt{s^2-4p} \Longleftrightarrow p=0$.
In that situation, $x_1=1-s$ and $1$ is root of multiplicity 2 of $\chi_2$, whereas $x_1=1-s$ is its third root. As the case $x_1=1$ has already been regarded, we can consider that $s\neq 0$. Using \eqref{resol}, These facts lead to the following
system:

$$\left\{ 
\begin{array}{rl}
1 & = a_n + b_n + c_n,\\
n & = 2 a_n  + b_n, \\
(1-s)^n & = (1-s)^2 a_n  + (1-s)b_n+c_n.\\
\end{array}
\right.$$

Standard computations then give the following formula:
\begin{equation}
\label{uneSoluce}
\left\{
\begin{array}{l}
 a_n = \dfrac{-1+sn+(1-s)^n}{s^2},\\
 b_n = \dfrac{(3-s) +(s^2-2s)n+(s-3)(1-s)^n}{s},\\
 c_n = \dfrac{(s-1)(2s-1)-s(s-1)^2n-(s^2-3s+1)(1-s)^n}{s^2}. 
\end{array}
\right.
\end{equation}

\paragraph{Case $x_1=x_2 \neq 1$ ($\Delta = 0$)}
Then \eqref{x1x2eval} implies that $x_1 = 1-s/2 \in\left[-\frac{1}{2},1\right)$. From a differentiation of \eqref{resol} one 
deduces that $x_1$ satisfies the following system for every $n\in\mathbb{N}^*$,
$$\left\{ 
\begin{array}{cl}
1 & = a_n + b_n + c_n\\
x_1^n & = a_n x_1^2 + b_n x_1 + c_n\\
n x_1^{n-1} & = 2 a_n x_1 + b_n\\
\end{array}
\right.$$
Standard algebraic computations give, since $x_1\neq 1$,
\begin{equation}\label{soluceparticular}
\left\{
\begin{array}{l}
a_n = n\dfrac{x_1^{n-1}}{x_1-1}-\dfrac{X_1(n)}{x_1-1}\\
\\
b_n = X_1(n)-a_n(x_1+1)\\
\\
c_n=1-a_n-b_n
\end{array}
\right.
\end{equation}
where $X_1(n)$ is defined in \eqref{Xi}.

\subsection{Convergence study}
\subsubsection{Convergence study in the general case}\label{sssec:convgen}

We suppose in this section that $x_1 \neq x_2$, $x_1 \neq 1$, and $x_2 \neq 1$. So
formulations of \eqref{soluce} hold for $a_n, b_n$, and $c_n$. We split the study convergence in several sub-cases, that are the objects of Theorems \ref{th1x}-\ref{thfinalcase}.

\begin{thm}
\label{th1x}
Suppose that $\left|x_1\right|<1$ and
$\left|x_2\right|<1$.
Then the frequencies $P_R(n), P_C(n)$, and
$P_T(n)$ of occurrence at
time $n$ of purines, cytosines, and thymines in the considered gene,
 converge to the following values:
\begin{itemize}
 \item $P_R(n)\longrightarrow \dfrac{ce+cf+de}{p-bf+df}$
 \item $P_C(n)\longrightarrow \dfrac{ae+af+bf}{p-bf+df}$
 \item $P_T(n)\longrightarrow \dfrac{ad+bc+bd}{p-bf+df}$
\end{itemize}
\end{thm}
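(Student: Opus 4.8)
The plan is to leverage the spectral information already packaged in \eqref{Mn}: under the standing hypotheses of Section~\ref{sssec:convgen} the scalars $a_n,b_n,c_n$ are given by \eqref{soluce}, so it suffices to show these three sequences converge, deduce convergence of $M^n$, and then identify the limiting row vector as the unique stationary probability distribution. First I would pass to the limit in \eqref{soluce}. Since $|x_1|<1$ and $|x_2|<1$ we have $x_i^n\to 0$ (this holds verbatim whether the two roots are real or a complex-conjugate pair, which is the only point where that distinction could matter), so from \eqref{Xi} each $X_i(n)$ tends to $\tfrac{1}{1-x_i}$. Substituting into \eqref{soluce} gives explicit limits $a_\infty,b_\infty,c_\infty$. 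The single simplification needed is Vieta's relations for the quadratic factor of $\chi_M$, namely $x_1+x_2=2-s$ and $x_1x_2=1-s+p$, which yield the identity $(1-x_1)(1-x_2)=p$; with these the factors $\sqrt{\Delta}=x_2-x_1$ cancel and one obtains the clean values $a_\infty=\tfrac1p$, $b_\infty=\tfrac{s-2}{p}$, $c_\infty=\tfrac{1-s+p}{p}$.

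Feeding these limits into \eqref{Mn} shows $M^n\to M^\infty:=\tfrac1p\bigl(M^2+(s-2)M+(1-s+p)I_3\bigr)$. I would then record that $M^\infty$ is exactly the spectral projector onto the eigenvalue $1$: writing $q(x)=x^2+(s-2)x+(1-s+p)$ for the quadratic factor, one has $q(M)=(M-x_1I_3)(M-x_2I_3)$ and $q(1)=p$, so $M^\infty=q(M)/q(1)$ is the Lagrange--Sylvester projector associated with the simple eigenvalue $1$. In particular every row of $M^\infty$ is proportional to the left $1$-eigenvector, i.e.\ $M^\infty=\mathbf 1^{\mathsf T}\pi$ for a fixed row vector $\pi$, and the rank-one structure is what collapses all initial data.

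Next, by \eqref{Pn} and continuity of $P\mapsto PM^\infty$, we get $P_n=P_0M^n\to P_0M^\infty$. To identify this limit without expanding $M^\infty$ entrywise, I would characterise it as a fixed point: passing to the limit in $P_n=P_{n-1}M$ gives $\pi=\pi M$, while each $P_n$ being a probability vector forces $P_R+P_C+P_T\to 1$. Since $|x_1|,|x_2|<1$ the eigenvalue $1$ is a \emph{simple} root of $\chi_M$, its left eigenspace is one-dimensional, and hence $\pi$ is the unique probability vector solving $\pi(I_3-M)=0$; in particular the limit is independent of $P_0$, and $P_0M^\infty=(P_0\mathbf 1^{\mathsf T})\pi=\pi$.

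The final step is to solve the homogeneous system $\pi(I_3-M)=0$ explicitly. Writing $I_3-M$ out and computing its left null vector (most economically as a cross product of two of its rows, equivalently via the $2\times2$ cofactors) gives the unnormalised weights $ce+cf+de$, $ae+af+bf$, and $ad+bc+bd$ for $P_R,P_C,P_T$; dividing by their common normalising denominator produces the stated frequencies. I expect this last algebraic identification to be the main obstacle, since it is the only genuinely computational part of the argument: the cofactor expansion must be carried out carefully and the normalisation constant simplified, whereas every step upstream is a routine limit passage justified by $|x_1|,|x_2|<1$.
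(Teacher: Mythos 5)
Your proposal is correct, and it coincides with the paper's proof only in its first half: the passage to the limit in \eqref{soluce} via $X_i(n)\to \frac{1}{1-x_i}$ and the simplification to $a_\infty=\frac1p$, $b_\infty=\frac{s-2}{p}$, $c_\infty=\frac{p-s+1}{p}$ (the paper gets $(1-x_1)(1-x_2)=\frac{s^2-\Delta}{4}=p$ by expanding the radicals, you get it from Vieta --- same computation). Where you genuinely diverge is the identification of $\lim_n M^n$. The paper substitutes $a_\infty,b_\infty,c_\infty$ into \eqref{Mn} and expands $a_\infty M^2+b_\infty M+c_\infty I_3$ entrywise against the displayed formula for $M^2$. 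You instead observe that this limit equals $q(M)/q(1)$ with $q(x)=x^2+(s-2)x+(1-s+p)$, i.e.\ the rank-one spectral projector onto the simple eigenvalue $1$, so that $\lim_n M^n=\mathbf{1}^{\mathsf T}\pi$ with $\pi$ the unique stationary probability vector, which you then extract as a cross product of two columns of $I_3-M$. This buys a structural explanation of why the three rows of the limit coincide and why the answer is independent of $P_0$, and it replaces the entrywise expansion by one short linear-algebra computation; I checked that the cofactor computation does yield the weights $ce+cf+de$, $ae+af+bf$, $ad+bc+bd$ and that this vector is indeed a left fixed vector of $M$.

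One point you should make explicit rather than gloss over: your normalising constant is the sum of the three unnormalised weights, and that sum is exactly $p=ad+ae+af+bc+bd+bf+ce+cf+de$, not the quantity $p-bf+df$ printed in the theorem. Since the three limits are probabilities and must sum to $1$, the denominator has to equal the sum of the numerators, so $p$ is the correct normaliser and the paper's $p-bf+df$ is a typo (the two agree only when $bf=df$; note also that Theorem \ref{thfinalcase} uses the denominator $\frac{s^2}{4}=p$ in the degenerate case, consistent with $p$). This is a defect of the statement, not a gap in your argument, but as written your phrase ``dividing by their common normalising denominator produces the stated frequencies'' is not quite true --- it produces the frequencies with denominator $p$.
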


\begin{proof}
If $|x_1|<1$ and $|x_2|<1$ then $X_i(n) \longrightarrow \dfrac{1}{1-x_i}$ for
$i=1,2$ and so
$$a_n \longrightarrow \dfrac{1}{\sqrt{\Delta}}\left(\dfrac{1}{1-x_2}-\dfrac{1}{1-x_1}\right).$$
Denote by $a_\infty$ this limit. We have 
$$a_\infty = \dfrac{x_2-x_1}{\sqrt{\Delta}(1-x_2)(1-x_1)} = \dfrac{1}{(1-x_2)(1-x_1)} = \dfrac{1}{\dfrac{s+\sqrt{\Delta}}{2}\dfrac{s-\sqrt{\Delta}}{2}},$$
and finally
\begin{equation*}
a_\infty = \dfrac{4}{s^2-\Delta}=\dfrac{1}{p}. 
\end{equation*}
Similarly, $b_n = X_1(n)-a_n (x_1+1)$ satisfies 
\begin{equation*}
b_n\longrightarrow \dfrac{1}{1-x_1} - \dfrac{x_1+1}{p}.
\end{equation*}
The following computations
\begin{gather*}
\dfrac{1}{1-x_1} = \dfrac{2}{s+\sqrt{\Delta}} = \dfrac{2(s-\sqrt{\Delta})}{s^2-\Delta} = \dfrac{s-\sqrt{\Delta}}{2p}, \\
\dfrac{x_1+1}{p} = \dfrac{-s+4-\sqrt{\Delta}}{2p},
\end{gather*}
finally yield
\begin{equation*}
b_\infty = \dfrac{s-2}{p}.
\end{equation*}
So  
\begin{equation*}
c_n \longrightarrow 1-a_\infty-b_\infty = \dfrac{p-s+1}{p},
\end{equation*}
and to sum up, the distribution limit is given by
\begin{equation}
\left\{
\begin{array}{l}
a_\infty = \dfrac{1}{p}\\
\\
b_\infty = \dfrac{s-2}{p}\\
\\
c_\infty=\dfrac{p-s+1}{p}
\end{array}
\right.
\end{equation}
Using the latter values in \eqref{Mn}, we can 
determine the limit of $M^n$, which is
$a_\infty M^2+b_\infty M + c_\infty I_3$.
All computations done, we find the following limit for $M^n$,
\begin{equation*}
\dfrac{1}{p-bf+df}\left(\begin{array}{ccc}
ce+cf+de & ae+af+bf & ad+bc+bd \\            
ce+cf+de & ae+af+bf & ad+bc+bd \\                   
ce+cf+de & ae+af+bf & ad+bc+bd\\                   
                  \end{array}
\right).
\end{equation*}
Using \eqref{Pn}, we can thus finally determine
the limit of $P_n = P_0 M^n$ \linebreak $= 
(P_R(0) ~~ P_C(0) ~~ P_T(0)) M^n $.
\end{proof}

\begin{thm}
Suppose that $\left|x_1\right|=1, x_1\neq 1$, and
$\left|x_2\right|\neq 1$.
Then the evolutionary model is not convergent. More precisely, we have:
\begin{itemize}
\item $P_R(2n) = (a^2 + 2ab + ac - 2a+ b^2 + be - 2b + 1)P_R(0) + (-a^2 - ab - ac - ad + 2a + bf)P_C(0)+(-ab + ad - b^2 - be - bf + 2b) P_T(0) $,
\item $P_R(2n+1) = (1-a-b)P_R(0) + aP_C(0)+bP_T(0) $,
\item $P_C(2n) = (-ac - bc - c^2 - cd + 2c + de )P_R(0) + (ac + c^2 + 2cd - 2c  + d^2 +df - 2d + 1)P_C(0)+(bc - cd - d^2 - de - df + 2d) P_T(0) $,
\item $P_C(2n+1) = cP_R(0) + (1-c-d)P_C(0)+dP_T(0) $,
\item $P_T(2n) = (-ae - be + cf- e^2 - ef + 2e )P_R(0) + (ae - cf - df - ef - f^2 + 2f)P_C(0)+(be +df + e^2 + 2ef- 2e + f^2 - 2f + 1) P_T(0) $,
\item $P_T(2n+1) = eP_R(0) + fP_C(0)+(1-e-f)P_T(0) $,
\end{itemize}
\end{thm}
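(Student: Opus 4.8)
The plan is to reuse the Cayley--Hamilton decomposition $M^n = a_n M^2 + b_n M + c_n I_3$ from \eqref{Mn}: we are still in the regime of three distinct eigenvalues $1, x_1, x_2$, because $\left|x_1\right|=1\neq\left|x_2\right|$ forces $x_1 \neq x_2$, so the closed forms \eqref{soluce} for $a_n, b_n, c_n$ remain valid and no separate treatment of a multiple root is needed. First I would pin down $x_1$. Since $\chi_M$ has real coefficients, if $x_1$ were non-real then $x_2 = \overline{x_1}$ and $\left|x_2\right| = \left|x_1\right| = 1$, contradicting the hypothesis $\left|x_2\right|\neq 1$; hence $x_1$ is real, and a real number of modulus $1$ that is different from $1$ can only be $x_1 = -1$. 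The stochasticity remark following \eqref{x1x2eval} gives $\left|x_2\right|\leqslant 1$, so in fact $\left|x_2\right|<1$.

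The crux of the argument is the simplification produced by $x_1+1=0$. Substituting $x_1=-1$ and $\sqrt{\Delta}=x_2-x_1=x_2+1$ into the middle line of \eqref{soluce} collapses the $X_2$ contribution and leaves
\begin{equation*}
b_n = \frac{(x_2+1)X_1(n)}{\sqrt{\Delta}} = X_1(n) = \frac{(-1)^n-1}{-2},
\end{equation*}
with $X_1$ as in \eqref{Xi}; thus $b_n=0$ for even $n$ and $b_n=1$ for odd $n$. This exact two-valued behaviour of $b_n$ is precisely the mechanism of the announced period-$2$ oscillation. Meanwhile $\left|x_2\right|<1$ gives $X_2(n)\to \tfrac{1}{1-x_2}$, so $a_n$ and $c_n=1-a_n-b_n$ converge along each parity class (to different limits according to parity). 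Passing to the limit in \eqref{Mn} separately for even and odd indices then yields two distinct limiting matrices: the even subsequence $M^{2n}$ converges to a combination of $M^2$ and $I_3$ only (no $M$ term, since $b_{2n}=0$), while the odd subsequence $M^{2n+1}$ converges to a combination that genuinely involves $M$ (since $b_{2n+1}=1$). Writing $P_n = P_0 M^n$ as in \eqref{Pn} and reading off the three components of $P_0$ against each limiting matrix then produces the announced two-regime formulas, the even-time values being organised by the entries of $M^2$ (supplied explicitly after \eqref{Mn}) and the odd-time values additionally by those of $M$.

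Non-convergence of $(P_n)$ is then immediate: the two subsequential limits differ because the $M$-contribution appears only in the odd-indexed limit, so $(P_n)$ has two distinct accumulation points. I expect the main obstacle to be purely the algebraic bookkeeping — multiplying $P_0$ through the explicit $M^2$ and $M$, collecting the coefficients of $P_R(0), P_C(0), P_T(0)$ componentwise, and verifying that the resulting even and odd expressions are truly distinct rather than accidentally coinciding for degenerate parameter choices. As a preliminary I would also record, once and for all, that the degeneracies are excluded here: $x_1=x_2$ cannot occur because $\left|x_1\right|\neq\left|x_2\right|$, and $x_2=1$ would place us in the earlier multiplicity case rather than this one, so \eqref{soluce} applies without modification throughout.
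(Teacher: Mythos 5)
Your argument establishes non-convergence, but it does not prove the statement as written, and the point where it stops short is precisely where the paper's proof does its real work. The theorem asserts \emph{exact identities valid for every $n$}: the even-time vector is $P_0M^2$ and the odd-time vector is $P_0M$, for all $n$. Your proof only yields convergence of $M^{2n}$ and $M^{2n+1}$ along the two parity classes, which is strictly weaker, and the even limit you would obtain is not $M^2$. Indeed, with $x_1=-1$ one has $X_1(2n)=0$, hence $a_{2n}=X_2(2n)/(x_2+1)\longrightarrow 1/\bigl((1-x_2)(1+x_2)\bigr)$ and $c_{2n}\longrightarrow 1-1/(1-x_2^2)$, so $M^{2n}\longrightarrow \frac{1}{1-x_2^2}M^2+\bigl(1-\frac{1}{1-x_2^2}\bigr)I_3$, which coincides with $M^2$ only if $x_2=0$. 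The missing ingredient is therefore the determination of $x_2$: the paper combines $x_1=-1$ with the product of the roots of the quadratic factor of $\chi_M$, namely $x_1x_2=1-s+p$, to conclude $x_2=0$; then $X_2(n)=1$ for all $n\geqslant 1$, so $a_{2n}=1$, $a_{2n+1}=0$, $b_{2n}=0$, $b_{2n+1}=1$ and $c_n=0$ \emph{exactly}, whence $M^{2n}=M^2$ and $M^{2n+1}=M$ identically, which is what the displayed formulas express. Your observation that $x_1+1=0$ kills the $X_2$ term in $b_n$ is correct and matches the paper, but without $x_2=0$ it does not deliver the stated identities.

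If you repair the proof along these lines, be aware that the paper's own deduction of $x_2=0$ contains an algebraic slip: $x_1=-1$ gives $\sqrt{\Delta}=4-s$, hence $s^2-4p=(4-s)^2=s^2-8s+16$ and $p=2s-4$ (not $p=s-1$), so that $x_1x_2=1-s+p=s-3$ and $x_2=3-s$. This vanishes only when $s=3$, i.e., when all three diagonal entries of $M$ are zero; for $2<s<3$ one has $0<x_2<1$ (for instance the stochastic matrix with rows $(0,1,0)$, $(1,0,0)$, $(1/2,1/4,1/4)$ has eigenvalues $1,-1,1/4$), and then the exact identities of the statement fail while the two distinct parity limits survive. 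So your weaker conclusion (non-convergence via two distinct subsequential limits) is the one that is correct in general, but it is not the theorem as stated, and your claim that the parity limits are organised by the entries of $M^2$ and $M$ holds only in the degenerate case $x_2=0$.
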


\begin{proof}
Suppose that $|x_1|=1$ and $|x_2| \neq 1$. Then $x_1,x_2 \in \mathbb{R}$, and so $x_1= 1$ or $x_1 = -1$. The first case
has yet been regarded.

If $x_1 = -1$, then $-s+2-\sqrt{\Delta}=-2$ (due to \eqref{x1x2}). So 
$s=4-\sqrt{\Delta}$, and so $s^2-4p = 4-4s+s^2$. Consequently, $p=s-1$.
But $x_1x_2 = 1-s+p$, so $x_1x_2=0$, which leads to $x_2=0$.
Using \eqref{soluce}, we can thus conclude that 
$a_n=1-\dfrac{(-1)^n-1}{-2} = \dfrac{1+(-1)^n}{2}$. 
So $a_{2n}=1$ and $a_{2n+1}=0$.
Similarly, $b_{2n}=0$ and $b_{2n+1}=1$, and finally $c_n=0, \forall n \in \mathbb{N}$.

These values for $a_n, b_n,$ and $c_n$ lead to the following values for $M^n$:
$$
\left\{
\begin{array}{l}
M^{2n} = M^2\\
M^{2n+1} = M.
\end{array}
\right.
$$
\end{proof}

\begin{rem}
The case $|x_1|\neq 1$ and $|x_2| = 1$ necessarilly implies that $x_2 = 1$, which is in contradiction with the assumptions made in preamble of Section \ref{sssec:convgen}.
\end{rem}

\begin{thm}
If $|x_1|=|x_2|$, but $x_1, x_2 \in \mathbb{C}\setminus \mathbb{R}$, 
then $\left(P_R(n) ~~P_C(n) ~~ P_T(n)\right) =
\left(P_R(0) ~~P_C(0) ~~ P_T(0)\right) \times (a_nM^2+b_nM+c_nI_3 )$, where
\begin{itemize}
\item $a_n = -\dfrac{sin\left(\frac{n\theta}{2}\right) sin\left(\frac{(n-1)\theta}{2}\right)}{sin\left(\frac{\theta}{2}\right) sin(\theta)}$,
\item $b_n = \dfrac{2~sin\left(\frac{n\theta}{2}\right) sin\left(\frac{(n-2)\theta}{2}\right) ~ cos\left(\frac{\theta}{2}\right)}{sin(\theta) ~sin\left(\frac{\theta}{2}\right)}$,
\item $c_n = 1 - \dfrac{sin\left(\frac{n\theta}{2}\right) sin\left(\frac{(n-3)\theta}{2}\right)}{sin(\theta) ~sin\left(\frac{\theta}{2}\right)}$.
\end{itemize}
with $ e^{-i\theta} = x_1$.
\end{thm}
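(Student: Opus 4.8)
The plan is to reduce everything to the coefficient formulas already established in \eqref{soluce} together with the Cayley--Hamilton identity \eqref{Mn}, so that the only genuine work is a trigonometric simplification. Since $x_1,x_2\in\mathbb{C}\setminus\mathbb{R}$ they are automatically distinct and both differ from $1$, hence the hypotheses guaranteeing \eqref{soluce} are satisfied, and \eqref{Mn} combined with \eqref{Pn} gives $P_n=P_0M^n=P_0\,(a_nM^2+b_nM+c_nI_3)$, which is exactly the asserted expression for $(P_R(n)\ P_C(n)\ P_T(n))$. It therefore remains only to rewrite the $a_n,b_n,c_n$ of \eqref{soluce} in trigonometric form. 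Because $\chi_M$ has real coefficients, the two non-real roots are conjugate, $x_2=\overline{x_1}$; writing $x_1=e^{-i\theta}$ as in the statement encodes $|x_1|=|x_2|=1$ (equivalently $x_1x_2=\det(M)=1-s+p=1$), and then $x_2=e^{i\theta}$ and $\sqrt{\Delta}=x_2-x_1=2i\sin\theta$ once a branch of the square root is fixed.

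Next I would evaluate the quantities $X_i(n)$ of \eqref{Xi} in closed form. Using the elementary factorization $e^{i\psi}-1=2i\,e^{i\psi/2}\sin(\psi/2)$ in the numerator (with $\psi=n\theta$) and the denominator (with $\psi=\theta$), one obtains
\begin{equation*}
X_2(n)=\frac{x_2^{\,n}-1}{x_2-1}=e^{i(n-1)\theta/2}\,\frac{\sin(n\theta/2)}{\sin(\theta/2)},\qquad X_1(n)=\overline{X_2(n)},
\end{equation*}
the second identity again because $x_1=\overline{x_2}$.

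The decisive observation for the final step is that in each of the three numerators in \eqref{soluce} the two summands are complex conjugates of one another, since $x_2=\overline{x_1}$ and $X_1(n)=\overline{X_2(n)}$. Hence each numerator is purely imaginary, of the form $2i\,\mathrm{Im}(\cdot)$ up to sign, and dividing by $\sqrt{\Delta}=2i\sin\theta$ yields a real expression. Concretely, $X_2(n)-X_1(n)=2i\,\mathrm{Im}\,X_2(n)$ produces $a_n$; for $b_n$ one writes $(x_2+1)X_1(n)-(x_1+1)X_2(n)=-2i\,\mathrm{Im}\bigl[(x_1+1)X_2(n)\bigr]$ and inserts $x_1+1=e^{-i\theta}+1=2\cos(\theta/2)\,e^{-i\theta/2}$; for $c_n$ one writes $x_1X_2(n)-x_2X_1(n)=2i\,\mathrm{Im}\bigl[x_1X_2(n)\bigr]$. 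Extracting the imaginary parts with the formula for $X_2(n)$ above shifts the phase by $-\tfrac{\theta}{2},-\theta,-\tfrac{3\theta}{2}$, respectively, producing the arguments $(n-1)\theta/2$, $(n-2)\theta/2$, and $(n-3)\theta/2$ and hence the three claimed expressions for $a_n,b_n,c_n$.

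The individual steps are routine, so I expect the main obstacle to be the sign and branch bookkeeping: fixing which root is named $x_1$, which value of $\sqrt{\Delta}$ is taken, and therefore the sign of $\sin\theta$, since these choices determine the global signs in the three formulas and must be pinned down for the output to match the statement exactly. A secondary point worth flagging is that the parametrization $x_1=e^{-i\theta}$ tacitly assumes $|x_1|=1$; in the fully general complex case one has $|x_1|=\sqrt{1-s+p}\le 1$ by \eqref{ineg} and the stochasticity bound, so one would instead set $x_1=\rho\,e^{-i\theta}$ with $\rho=\sqrt{1-s+p}$, and the identical computation would yield the same trigonometric expressions with the sines weighted by suitable powers of $\rho$. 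Specializing to $\rho=1$ (that is, $p=s$) recovers the stated purely oscillatory formulas.
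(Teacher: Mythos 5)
Your route is the same as the paper's: both reduce to \eqref{soluce} and \eqref{Mn}, set $x_1=e^{-i\theta}$, $x_2=e^{i\theta}$, expand $X_i(n)$ via $e^{i\psi}-1=2i\,e^{i\psi/2}\sin(\psi/2)$, and simplify the conjugate pairs. However, the two caveats you flag as ``bookkeeping'' are in fact genuine defects of the statement and of the published proof, so they deserve more than a footnote. First, the parametrization $x_1=e^{-i\theta}$ is an unstated extra hypothesis: for conjugate non-real roots $|x_1|^2=x_1x_2=1-s+p$, so $|x_1|=1$ forces $p=s$, whereas under the stated hypothesis one only has $|x_1|=\sqrt{1-s+p}\leqslant 1$; your $\rho$-weighted variant with $x_1=\rho e^{-i\theta}$ is what actually covers the theorem as worded (and when $\rho<1$ the coefficients converge instead of oscillating). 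Second, the signs: under the paper's own convention \eqref{x1x2} one has $\sqrt{\Delta}=x_2-x_1=2i\sin\theta$, while the proof in the paper silently switches to $\sqrt{\Delta}=x_1-x_2$; carrying your computation through with the consistent branch yields $a_n=+\frac{\sin(n\theta/2)\sin((n-1)\theta/2)}{\sin(\theta/2)\sin\theta}$ and $b_n=-\frac{2\sin(n\theta/2)\sin((n-2)\theta/2)\cos(\theta/2)}{\sin(\theta/2)\sin\theta}$, the opposite signs to those displayed. The elementary checks $a_2=1$, $b_1=1$, $c_1=0$ (immediate from $X^n=a_nX^2+b_nX+c_n$ for $n\leqslant 2$) show that these corrected signs are right and the theorem's are not. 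So: same method, sound plan, and the two points you isolate as needing care are precisely where the published argument slips.
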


\begin{proof}
Suppose that $|x_1|=|x_2|$, but $x_1, x_2 \in \mathbb{C}\setminus \mathbb{R}$. Then $x_1$ and $x_2$ are complex and conjugate, of the form $x_1 = e^{-i\theta}$, $x_2 = e^{i\theta}$, with $\theta \notequiv 0 [\pi] $. So 
$x_1-x_2 = \sqrt{\Delta} = e^{-i\theta}-e^{i\theta} = -2i ~sin(\theta)$, and
$$
\begin{array}{cl}
a_n & = \dfrac{X_2(n)-X_1(n)}{\sqrt{\Delta}} = \dfrac{X_1(n)-X_2(n)}{2i~sin(\theta)} \\\\
2i~sin(\theta) ~a_n & = \dfrac{e^{-in\theta}-1}{e^{-i\theta}-1} - \dfrac{e^{in\theta}-1}{e^{i\theta}-1}\\\\
& = \dfrac{e^{-in\frac{\theta}{2}}}{e^{-i\frac{\theta}{2}}} \dfrac{e^{-in\frac{\theta}{2}} - e^{in\frac{\theta}{2}}}{e^{-i\frac{\theta}{2}} - e^{i\frac{\theta}{2}}} - \dfrac{e^{in\frac{\theta}{2}}}{e^{i\frac{\theta}{2}}} \dfrac{e^{in\frac{\theta}{2}} - e^{-in\frac{\theta}{2}}}{e^{i\frac{\theta}{2}} - e^{-i\frac{\theta}{2}}}\\\\
 & = e^{-i\frac{(n-1)\theta}{2}} \dfrac{-2i~sin\left(\frac{n\theta}{2}\right)}{-2i~sin\left(\frac{\theta}{2}\right)}
 -
  e^{i\frac{(n-1)\theta}{2}} \dfrac{2i~sin\left(\frac{n\theta}{2}\right)}{2i~sin\left(\frac{\theta}{2}\right)}\\\\
& = \dfrac{sin\left(\frac{n\theta}{2}\right)}{sin\left(\frac{\theta}{2}\right)} \left( e^{-i\frac{(n-1)\theta}{2}} - e^{i\frac{(n-1)\theta}{2}} \right).
\end{array}
$$
Finally, 
$$a_n = -\dfrac{sin\left(\frac{n\theta}{2}\right) sin\left(\frac{(n-1)\theta}{2}\right)}{sin\left(\frac{\theta}{2}\right) sin(\theta)}.  $$
Similarly,

$$
\begin{array}{cl}
\sqrt{\Delta} b_n & = (x_2+1)X_1(n) -(x_1+1) X_2(n)\\\\
-2i ~sin(\theta) b_n & = \left(e^{i\theta}+1\right) e^{-i\frac{(n-1)\theta}{2}} \dfrac{sin\left(\frac{n\theta}{2}\right)}{sin\left(\frac{\theta}{2}\right)}
-
\left(e^{-i\theta}+1\right) e^{i\frac{(n-1)\theta}{2}}\dfrac{sin\left(\frac{n\theta}{2}\right)}{sin\left(\frac{\theta}{2}\right)}\\\\
 & = \dfrac{sin\left(\frac{n\theta}{2}\right)}{sin\left(\frac{\theta}{2}\right)} \left[ e^{-i\frac{(n-3)\theta}{2}} + e^{-i\frac{(n-1)\theta}{2}} - e^{i\frac{(n-3)\theta}{2}} - e^{i\frac{(n-1)\theta}{2}}\right]\\\\
 b_n & = \dfrac{sin\left(\frac{n\theta}{2}\right)}{sin(\theta) sin\left(\frac{\theta}{2}\right)}
  \left( sin\left(\frac{(n-3)\theta}{2}\right) + sin\left(\frac{(n-1)\theta}{2}\right)\right).
\end{array}
$$

and finally,
$$b_n = \dfrac{2~sin\left(\frac{n\theta}{2}\right) sin\left(\frac{(n-2)\theta}{2}\right) ~ cos\left(\frac{\theta}{2}\right)}{sin(\theta) ~sin\left(\frac{\theta}{2}\right)}.$$

As $c_n = 1-a_n-b_n$, we have:
$$c_n = 1 - \dfrac{sin\left(\frac{n\theta}{2}\right) sin\left(\frac{(n-3)\theta}{2}\right)}{sin(\theta) ~sin\left(\frac{\theta}{2}\right)}.$$

\end{proof}

\subsubsection{Convergence study in particular situations}

The case where $x_1=1$ has already been discussed, it implies that
 $a=b=c=d=e=f=0$, and so the system does not evolve. The other particular situations are invastigated in the two following theorems.

\begin{thm}
Suppose that $x_2=1$ and $x_1 \neq 1$ (or equivalently $p=0$). Then the system is well formulated if and only if
$M^2+s(s-2)M-(s-1)^2I_3 \neq 0$. In that situation, we have:
\begin{itemize}
 \item either $s \in ]0,2[$, and so $\left(P_R(n) ~~P_C(n) ~~ P_T(n)\right) \longrightarrow 
\left(P_R(0) ~~P_C(0) ~~ P_T(0)\right) \times \dfrac{1}{s^2}[-M^2+s(3-s)M+(s-1)(2s-1)I_3]$.
 \item or $s=2$, and so $\left(P_R(2n) ~~P_C(2n) ~~ P_T(2n)\right) \longrightarrow 
\left(P_R(0) ~~P_C(0) ~~ P_T(0)\right)$ whereas $\left(P_R(2n+1) ~~P_C(2n+1) ~~ P_T(2n+1)\right) \longrightarrow 
\left(P_R(0) ~~P_C(0) ~~ P_T(0)\right)\times (-2M^2+4M+2I_3)$.
\end{itemize}
\end{thm}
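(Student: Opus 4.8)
The plan is to extract everything from the explicit coefficients \eqref{uneSoluce} obtained in the paragraph treating $x_2=1$, together with the Cayley--Hamilton representation $M^n=a_nM^2+b_nM+c_nI_3$ of \eqref{Mn}, and then to pass to the limit after isolating the part of \eqref{uneSoluce} that is linear in $n$. Writing each of $a_n,b_n,c_n$ as a constant term, plus a multiple of $n$, plus a multiple of $(1-s)^n$, the terms proportional to $n$ assemble, inside \eqref{Mn}, into a single ``secular'' matrix $\frac{n}{s}\bigl[M^2+s(s-2)M-(s-1)^2I_3\bigr]$. All the delicacy of the statement is carried by this term, whose coefficient matrix is precisely the one appearing in the well-formulation hypothesis.

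First I would establish the equivalence. Recall that $p=0$ forces $1$ to be a double root of $\chi_M$ and $x_1=1-s$ to be its simple root, and that $s\neq0$ (otherwise $M=I_3$, the already-discarded case $x_1=1$). Since $M$ is stochastic, $M^n$ has entries in $[0,1]$ and is therefore bounded, so a genuine Jordan block at the double eigenvalue $1$ is impossible; equivalently $M$ is annihilated by $(X-1)\bigl(X-(1-s)\bigr)$, a degree-two relation that, when fed into \eqref{Mn}, governs the secular matrix $M^2+s(s-2)M-(s-1)^2I_3$. I would read ``well formulated'' as the regime in which this controlled reduction applies nondegenerately, then determine the exact vanishing locus of $M^2+s(s-2)M-(s-1)^2I_3$ by the reduction and match it to the case split, so that the condition $M^2+s(s-2)M-(s-1)^2I_3\neq0$ is shown to single out exactly the two alternatives displayed below.

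In the first alternative, \eqref{x1x2eval} together with $|x_1|\le1$ gives $x_1=1-s\in(-1,1)$, hence $(1-s)^n\to0$. Once the secular term is shown not to contribute, the coefficients of \eqref{uneSoluce} converge to finite limits $a_\infty,b_\infty,c_\infty$; substituting these into \eqref{Mn} produces a constant limit matrix, which one simplifies (reducing powers of $M$ through $(M-I_3)(M-(1-s)I_3)=0$) to the displayed $\frac{1}{s^2}\bigl[-M^2+s(3-s)M+(s-1)(2s-1)I_3\bigr]$. Left-multiplication by $P_0$ through \eqref{Pn} then yields the announced limit of $\left(P_R(n)~~P_C(n)~~P_T(n)\right)$. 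In the second alternative $s=2$, so $x_1=1-s=-1$ and $(1-s)^n=(-1)^n$ oscillates rather than converging; I would therefore split $n$ into even and odd indices, compute the limits of $a_{2k},b_{2k},c_{2k}$ and of $a_{2k+1},b_{2k+1},c_{2k+1}$ separately from \eqref{uneSoluce}, and substitute each limiting triple into \eqref{Mn} to get the two accumulation matrices $I_3$ and $-2M^2+4M+2I_3$, whence via \eqref{Pn} the period-two behaviour of the frequency vector.

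The main obstacle is the secular term: one must argue rigorously that the contribution growing like $n$ in \eqref{uneSoluce} is annihilated inside \eqref{Mn} by the double-root structure (equivalently, by boundedness of the stochastic iterates), since otherwise none of the limits above would exist. A secondary but genuine difficulty is bookkeeping: as soon as the annihilating polynomial has degree two, $I_3,M,M^2$ become linearly dependent, so the degree-two polynomial representing each limit matrix is not unique; I would therefore fix a canonical reduction modulo $(X-1)\bigl(X-(1-s)\bigr)$ in order to reproduce unambiguously the exact expressions printed in the statement.
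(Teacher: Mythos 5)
Your skeleton is the paper's own: substitute the coefficients \eqref{uneSoluce} into \eqref{Mn}, regroup into a constant part, a part proportional to $n$, and a part proportional to $(1-s)^n$, isolate the secular matrix $\frac{n}{s}\bigl[M^2+s(s-2)M-(s-1)^2I_3\bigr]$, and then treat separately $s\in\,]0,2[$ (where $(1-s)^n\to 0$) and $s=2$ (where $(1-s)^n=(-1)^n$ forces the even/odd split). That part of your plan coincides with the published proof and is fine.

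The one place where you depart from the paper is also where your argument has a concrete gap. You propose to \emph{derive} the vanishing of the secular term from power-boundedness of the stochastic matrix: since $\sup_n\|M^n\|<\infty$, the eigenvalue $1$ (double root of $\chi_M$ when $p=0$) admits no nontrivial Jordan block, hence $(M-I_3)(M-(1-s)I_3)=M^2+(s-2)M-(s-1)I_3=0$. That relation is correct, but it does \emph{not} annihilate the matrix in the statement: reducing $M^2+s(s-2)M-(s-1)^2I_3$ modulo $M^2+(s-2)M-(s-1)I_3=0$ leaves $(s-1)(s-2)(M-I_3)$, which is nonzero in general. So "the double-root structure kills the secular term" fails for the coefficient matrix you are actually manipulating. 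The source of the tension is that \eqref{uneSoluce} as printed is not the correct remainder (its $b_n$ gives $b_2=(s-1)(s-2)\neq 0$ instead of $b_2=0$); the correctly recomputed secular coefficient is $\frac{1}{s}(M-I_3)(M-(1-s)I_3)$, for which your boundedness argument does work — but then the matrix appearing in the well-formulation condition, and the limit expressions, come out different from those printed. You cannot simultaneously take \eqref{uneSoluce} at face value and invoke the minimal-polynomial annihilation; you must choose, and either choice forces you to confront a discrepancy you currently gloss over. Relatedly, the equivalence "well formulated $\iff M^2+s(s-2)M-(s-1)^2I_3\neq 0$" is never actually proved: "I would read well formulated as the regime where the reduction applies nondegenerately" is a placeholder, and it sidesteps the fact that the paper's own proof requires this matrix to \emph{equal} zero for $M^n$ to be bounded, whereas the statement asserts it must be \emph{nonzero} — a sign you need to resolve, not inherit.
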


\begin{proof}
Using \eqref{uneSoluce}, we can deduce that $M^n$ is equal to:

$\begin{array}{rcl}
a_nM^2+b_nM+c_nI_3 &=& \dfrac{n}{s}[M^2+s(s-2)M-(s-1)^2I_3]\\\\
&&+\dfrac{1}{s^2}(1-s)^n[M^2+s(s-3)M-(s^2-3s+1)I_3]\\\\
&&+\dfrac{1}{s^2}[-M^2+s(3-s)M+(s-1)(2s-1)I_3].\\\\
\end{array}
$

Several cases can be deduced from this equality.
\begin{itemize}
 \item If $s \in ]0,2[$, then $M^n$ is bounded if and only if 
$M^2+s(s-2)M-(s-1)^2I_3 = 0$. In that condition, 
$M^n \longrightarrow \dfrac{1}{s^2}[-M^2+s(3-s)M+(s-1)(2s-1)I_3]$.
 \item If $s = 2$, then another time $M^n$ is bounded if and only if 
$M^2+s(s-2)M-(s-1)^2I_3 = 0$. In that condition, 
$M^{2n} \longrightarrow I_3$, whereas $M^{2n+1} \longrightarrow 
-2M^2+4M+2I_3$.
 \item Finally, if $s>2$, then as $s=a+b+c+d+e+f$ and $a,b,c,d,e,f\in[0,1]$,
we have necessarily at least three coefficients in $a,b,c,d,e,f$ that are
non zero. So at least one product in $abc, abd, abe, abf, acd, ace, acf, 
ade, adf, aef, bcd$, $bce, bcf, bde, bdf, bef, cde, cdf, cef, def$ is 
strictly positive. This is impossible, as $p=ad+ae+af+bc+bd+bf+ce+cf+de$
is equal to 0.
\end{itemize}
\end{proof}

\begin{thm}\label{thfinalcase}
Suppose that $x_1=x_2 \neq 1$ (or equivalently $s^2 = 4p$).
Then the probabilities $P_R(n), P_C(n)$, and
$P_T(n)$ of occurrence at
time $n$ of a purine, cytosine, and thymine
on the considered
nucleotide, converge to the following values:
\begin{itemize}
 \item $P_R(n) \longrightarrow \dfrac{4}{s^2}(ce+cf+de)$,
 \item $P_C(n) \longrightarrow \dfrac{4}{s^2}(ae+af+bf)$,
 \item $P_T(n) \longrightarrow \dfrac{4}{s^2}(ad+bc+bd)$.
\end{itemize}
\end{thm}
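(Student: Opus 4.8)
To prove Theorem~\ref{thfinalcase}, the plan is to exploit the repeated-root formula \eqref{soluceparticular} and pass to the limit, thereby reducing the whole statement to the matrix evaluation already performed for Theorem~\ref{th1x}. We are in the case $\Delta=0$, where $x_1=x_2=1-s/2$; the preliminary discussion around \eqref{x1x2eval} places $x_1\in[-\tfrac12,1)$, and since the degenerate subcase $x_1=1$ (equivalently $s=0$) has already been treated, we have $s\in(0,3]$ and hence the strict bound $|x_1|<1$. This strict inequality is what guarantees the existence of the limit, so I would state it explicitly at the outset.

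First I would let $n\to\infty$ in \eqref{soluceparticular}. Because $|x_1|<1$, both $x_1^n\to 0$ and $n\,x_1^{n-1}\to 0$, so with $X_1(n)$ as in \eqref{Xi} one gets $X_1(n)\to\frac{1}{1-x_1}$ and therefore
\begin{equation*}
a_n\longrightarrow\frac{1}{(1-x_1)^2},\qquad b_n\longrightarrow\frac{-2x_1}{(1-x_1)^2},\qquad c_n\longrightarrow 1-a_\infty-b_\infty.
\end{equation*}
Substituting $1-x_1=s/2$, so that $(1-x_1)^2=s^2/4=p$ by the hypothesis $s^2=4p$, turns these into $a_\infty=\tfrac{4}{s^2}=\tfrac1p$, $b_\infty=\tfrac{s-2}{p}$, and $c_\infty=\tfrac{p-s+1}{p}$.

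The key observation is then that these three limiting coefficients are \emph{exactly} the values $a_\infty,b_\infty,c_\infty$ obtained in the proof of Theorem~\ref{th1x}. Consequently the limit of $M^n=a_nM^2+b_nM+c_nI_3$ furnished by \eqref{Mn} is the same polynomial expression $a_\infty M^2+b_\infty M+c_\infty I_3$ whose entries were already computed there: its rows are all equal to $\tfrac1p(ce+cf+de,\ ae+af+bf,\ ad+bc+bd)$, and each row sums to $1$, which serves as a consistency check. Passing to $P_n=P_0M^n$ via \eqref{Pn} and using that the entries of $P_0$ sum to $1$ then yields the announced limits, with the factor $\tfrac1p$ rewritten as $\tfrac{4}{s^2}$.

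The only genuinely analytic step is the justification that $n\,x_1^{n-1}\to 0$, and this is the point I would treat most carefully: it rests entirely on the strict bound $|x_1|<1$, which in turn follows from $s\in(0,3]$ excluding both $x_1=1$ and $|x_1|=1$. Everything after that is the algebraic identity $(1-x_1)^2=p$ together with the matrix evaluation already carried out for Theorem~\ref{th1x}, so no new computation is required.
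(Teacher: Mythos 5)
Your proof is correct and follows essentially the same route as the paper's: pass to the limit in the repeated-root formulas \eqref{soluceparticular} using the strict bound $|x_1|<1$, and then evaluate $a_\infty M^2+b_\infty M+c_\infty I_3$ via \eqref{Mn}. The only cosmetic difference is that you identify $(1-x_1)^2=p$ and reuse the matrix evaluation from the proof of Theorem~\ref{th1x}, whereas the paper recomputes the matrix directly and cancels the residual $\frac{s^2}{4}-p$ diagonal terms; both yield the same limit.
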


\begin{proof}
In that case $\Delta = 0$, meaning that 
\eqref{soluceparticular} holds.
Since $x_1\in\left[-\frac{1}{2},1\right)$, one gets the following limits,
\begin{gather*}
\lim_{n\to\infty} X_1(n) = - \frac{1}{1-x_1}, \\
\lim_{n\to\infty} x_1^n = 0,
\lim_{n\to\infty} n x_1^{n-1} = 0,
\end{gather*}
and finally $(a_n,b_n,c_n)$ converges to $(a_\infty,b_\infty,c_\infty)$ with
\begin{equation*}
\begin{cases}
a_\infty = \dfrac{1}{(1-x_1)^2} = \dfrac{4}{s^2} \\[8pt]
b_\infty = \dfrac{-2x_1}{(1-x_1)^2} = 4~\dfrac{s-2}{s^2} \\[8pt]
c_\infty = \dfrac{2 x_1-1}{(1-x_1)^2} + 1 = \left(1 -\dfrac{2}{s}\right)^2
\end{cases}
\end{equation*}

Using these values in \eqref{Mn}, we can 
determine the limit of $M^n$, which is
$a_\infty M^2+b_\infty M + c_\infty I_3$, where
$I_3$ is the identity matrix of size 3.
All computations done, we find
$$M^n\longrightarrow \dfrac{4}{s^2}\left(\begin{array}{ccc}
M_{11} & M_{12} & M_{13} \\            
M_{21} & M_{22} & M_{23} \\                   
M_{31} & M_{32} & M_{33} \\                   
                  \end{array}
\right)$$
with
$M_{11} = \dfrac{s^2}{4}-p+ce+cf+de$, $M_{12} = ae+af+bf$, $M_{13} = ad+bc+bd$,
$M_{21} = ce+cf+de$, $M_{22} = \dfrac{s^2}{4}-p+ae+af+bf$, $M_{23} = ad+bd+bc$,
$M_{31} = ce+de+cf$, $M_{32} = ae+af+bf$, and $M_{33} = \dfrac{s^2}{4}-p+ad+bc+bd$.
However, since $x_1=x_2$, we have 
$\Delta = s^2-4p=0$ and so
$$M^n \longrightarrow 
\dfrac{4}{s^2}\left(
\begin{array}{ccc}
ce+cf+de & ae+af+bf & ad+bc+bd\\
ce+cf+de & ae+af+bf & ad+bc+bd\\
ce+cf+de & ae+af+bf & ad+bc+bd
\end{array}
\right),
$$
\end{proof}

\section{Application in Concrete Genomes Prediction}



\label{appnum}

We consider another time the numerical values for mutations published 
in~\cite{Lang08}. Gene \emph{ura3} of the Yeast \textit{Saccharomyces cerevisiae} has a mutation rate of
$3.80 \times 10^{-10}$/bp/generation~\cite{Lang08}. As this gene is constituted by 804
nucleotides, we can deduce that its global mutation rate per generation is equal to
$m = 3.80 \times 10^{-10} \times 804 = 3.0552\times 10^{-7}$.
Let us compute the values of $a,b,c,d,e,$ and $f$.
The first line of the mutation matrix is constituted by 
$1-a-b = P(R \rightarrow R)$, $a=P(R \rightarrow T)$,
and $b=P(R \rightarrow C)$. $P(R \rightarrow R)$ takes into account
the fact that a purine can either be preserved (no mutation, probability $1-m$), 
or mutate into another purine ($A \rightarrow G$, $G \rightarrow A$).
As the generations pass, authors of~\cite{Lang08} have counted 0 mutations of
kind $A \rightarrow G$, and 26 mutations of kind $G \rightarrow A$.
Similarly, there were 28  mutations $G \rightarrow T$ and 8: $A \rightarrow T$,
so 36: $R \rightarrow T$. Finally, 6: $A \rightarrow C$ and 9: $G \rightarrow C$ 
lead
to 15: $R \rightarrow C$ mutations. The total of mutations to consider when 
evaluating the first line is so equal to 77. All these considerations lead
to the fact that $1-a-b=(1-m)+m\dfrac{26}{77}$, $a=\dfrac{36m}{77}$, and $b=\dfrac{15m}{77}$. A similar reasoning leads to $c=\dfrac{19m}{23}$, $d=\dfrac{4m}{23}$, 
$e=\dfrac{51m}{67}$, and $f=\dfrac{16m}{67}$. 

In that situation, $s=a+b+c+d+e+f=\dfrac{205m}{77} \approx 8.134\times 10^{-7}$, and $p= \dfrac{207488 m^2}{118657}
\approx 1.632\times 10^{-13}.$
So $\Delta = s^2-4p = \dfrac{854221 m^2}{9136589} >0$, $x_1 = 1-\dfrac{m}{2}\left(\dfrac{205}{77}+\sqrt{\dfrac{854221}{9136589}}\right)$, and $x_2 = 1-\dfrac{m}{2}\left(\dfrac{205}{77}-\sqrt{\dfrac{854221}{9136589}}\right)$.
As $x_1\approx 0.9999685 \in [0,1]$ and $x_2 \approx 0.9999686 \in [0,1]$, we have, due to Theorem~\ref{th1x}:
\begin{itemize}
 \item $P_R(n)\longrightarrow \dfrac{ce+cf+de}{p-bf+df}$
 \item $P_C(n)\longrightarrow \dfrac{ae+af+bf}{p-bf+df}$
 \item $P_T(n)\longrightarrow \dfrac{ad+bc+bd}{p-bf+df}$
\end{itemize}

Using the data of~\cite{Lang08}, we find that $P_R(0)=\dfrac{460}{804}\approx 0.572$,
$P_C(0)=\dfrac{133}{804}\approx 0.165$, and
$P_T(0)=\dfrac{211}{804}\approx 0.263$. So $P_R(n) \longrightarrow 0.549$, $P_C(n) \longrightarrow 0.292$, and 
$P_T(n) \longrightarrow 0.159$.
Simulations corresponding to this example are given in Fig.~\ref{3param}.

\begin{figure}
\centering
\includegraphics[scale=0.47]{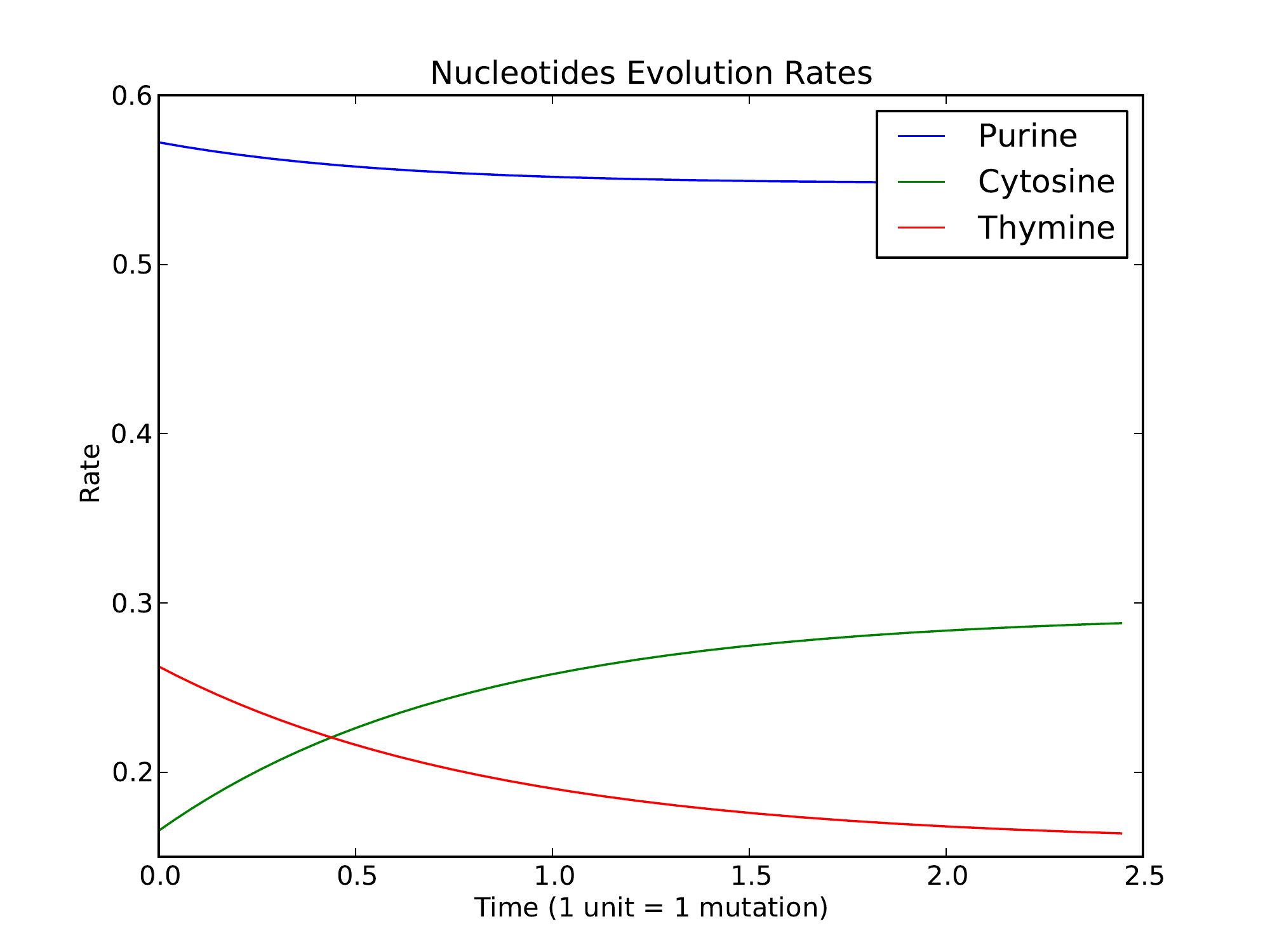}
\caption{\small Prediction of evolution concerning the purine, thymine, and cytosine rates in \emph{ura3}. Non-symmetric Model of size $3\times 3$.}
\label{3param}
\end{figure}

\section{Final Remarks}
\label{Conclusion}

In this document, a formulation of the non symmetric discrete model of size 
$2 \times 2$ has been proposed, which studies a DNA evolution taking into 
account purines and pyrimidines mutation rates. A simulation has been performed, 
to compare the proposal to the well known Jukes and
Cantor model. Then all non-symmetrical models of size 3x3 that have 6 parameters have been studied theoretically.
They have been tested with numerical 
simulations, to make a distinction 
between cytosines and thymines in the former proposal.
These two models still remain generic, and can be adapted to a large panel of
applications, replacing either the couple (purines, pyrimidines) or the tuple
(purines, cytosines, thymines) by any categories of interest.


Remark that the $ura3$ gene is not the unique example of a DNA sequence of interest such that none of
the existing nucleotides evolution models cannot be applied due to a complex
mutation matrix. For instance, a second gene called $can1$ has been studied too
by the authors of~\cite{Lang08}. Similarly to gene $ura3$, usual models cannot be used to predict the
evolution of $can1$, whereas a study following a same canvas than what has 
been proposed in this research work can be realized.

In future work, biological consequences
of the results produces by these models will be systematically investigated. Then,
the most general non symmetric model of size $4 \time 4$ will be regarded in some
particular cases taken from biological case studies, and the possibility of 
mutations non uniformly distributed will then be regarded.
Finally, this $4\times 4$ general case will be investigated using Perron-Frobenius
based approaches instead of using methods directly inspired by linear algebra,
in order to obtain the most global results on mutation matrices.

\bibliographystyle{model1a-num-names}

\end{document}